\documentclass[12pt]{article}

\usepackage{amsmath, amsthm, amssymb, amsfonts}
\DeclareMathOperator{\rank}{rank}
\newcommand{\ap}[2]{#1{:}#2}
\DeclareMathOperator{\re}{RE}
\DeclareMathOperator{\da}{DA}
\DeclareMathOperator{\eada}{EADA}

\usepackage[margin=1.25in]{geometry}
\usepackage{setspace}
\usepackage[dvipsnames]{xcolor}
\usepackage{tikz}
\usepackage{subfiles}
\usepackage{ifthen}

\theoremstyle{plain}
\newtheorem{theorem}{Theorem}
\newtheorem{lemma}{Lemma}
\newtheorem{proposition}{Proposition}
\newtheorem{corollary}{Corollary}
\theoremstyle{definition}

\newtheorem{definition}{Definition}

\usepackage{natbib}
\usepackage[
    colorlinks=true,
    linkcolor=blue,
    citecolor=blue,
    urlcolor=blue,
    filecolor=blue,
    hyperfootnotes=false,
    hyperindex,
    breaklinks
]{hyperref}

\title{Rawlsian equity: a new notion of fairness for the assignment problem}

\author{
\"Ozg\"un Ekici\thanks{Department of Economics, \"Ozye\u{g}in University, Istanbul, T\"urkiye. Email: \texttt{ozgun.ekici@ozyegin.edu.tr}}
\and
Sinan Ertemel\thanks{Department of Economics, Istanbul Technical University, Istanbul, T\"urkiye. Email: \texttt{ertemels@itu.edu.tr}}
\and
M. Bumin Yenmez\thanks{Department of Economics, Washington University in St. Louis, USA; Durham University, UK; and \"Ozye\u{g}in University, T\"urkiye. Email: \texttt{bumin@wustl.edu}}
}

\date{\today}

\begin{document}
\maketitle

\begin{abstract}
We introduce Rawlsian equity, a notion of fairness for allocating indivisible objects among agents without object-specific entitlements. Rawlsian equity requires that an object not be assigned to an agent who ranks it more highly than another agent unless the latter receives a more preferred object. We show that the set of Rawlsian equitable allocations coincides with the set of stable allocations in an auxiliary market where object priorities depend on agents' preference reports. The agent-proposing deferred acceptance algorithm computes the agent-optimal element of this set, but no Rawlsian equitable rule is strategy-proof in general. We also study Rawlsian efficiency, which reconciles our fairness notion with Pareto efficiency; and we characterize Rawlsian equity among fairness criteria that adjudicate objections pairwise on the basis of ranks.
\end{abstract}

\medskip
\noindent\textbf{Keywords:} Assignment problem, object allocation, fairness, Rawlsian equity, justified envy.

\medskip
\noindent\textbf{JEL classification:} C78, D47, D63.

\section{Introduction}
If beauty is in the eye of the beholder, then so is fairness. Consider the canonical story of a flute that must be given to one of four children. The first repaired the flute and claims it as a reward for her effort. The second is the daughter of its previous owner and claims it as her birthright. The third is a flutist and claims it because she can put it to the best use. The fourth has far fewer toys than the others and claims it on compensatory grounds. These four claims appeal, respectively, to the principles of reward, exogenous rights, fitness, and compensation.\footnote{See \citet{moulin:04} for the four principles of fairness; our version of the flute story is from p.~21, where it is traced to Plato. See also Aristotle, \textit{Politics}, III.12, 1282b--1283a.}

The first three principles have played a central role in matching and market design. The reward principle, for example, underlies office assignments based on seniority. The exogenous rights principle underlies kidney exchange, where patients are treated as having property rights over their donors' kidneys. The fitness principle underlies college admissions, where applicants with higher test scores receive priority because test scores are regarded as indicators of how well applicants will perform academically in college.

In contrast, the compensation principle has received less formal attention in this literature, although in many applications it is the only one with any normative purchase. When agents are indistinguishable in the eyes of the allocating authority, reward, exogenous rights, and fitness are inapplicable. For instance, in the allocation of public housing units to families or dormitory rooms to incoming students, agents have neither contributed to the provision of the allocated resources nor possess prior claims over them. Therefore, the reward and exogenous rights principles are inapplicable. Arguably, in these applications, fitness is not a particularly relevant basis for preferential treatment either. This leaves the compensation principle as the only potentially applicable one among the four.

Motivated by the compensation principle, we reconsider the assignment problem in light of a new notion of fairness. In this classical problem, there are $n$ indivisible objects to be allocated to $n$ agents: each agent receives exactly one object, preferences are strict, and monetary transfers are not allowed. We ask what compensation requires when the only information that distinguishes agents is their preference rankings over objects. As a guiding principle, compensation seeks to improve the welfare of agents who are worse off. We deploy this objective by developing an ordinal version of the Rawlsian principle, which accords greater weight to the welfare of less advantaged agents.

Suppose agent $i$ ranks object $o$ as her fourth choice, and agent $j$ ranks it as her second choice. If $j$ is assigned $o$, $i$ will not envy her provided that she is assigned one of her top three choices. However, if she is assigned an object worse than $o$, she envies $j$ for having $o$. She may object to $j$'s assignment of $o$ as her second choice when she cannot receive it even as her fourth choice. We describe this situation as Rawlsian envy. Formally, agent $i$ \emph{Rawlsian envies} agent $j$ if $i$ prefers $j$'s assigned object to her own, and this object ranks lower in $i$'s preference order than in $j$'s. If both agents rank $j$'s assigned object in the same position, the tie is broken by means of an exogenous tie-breaking profile: $i$ then Rawlsian envies $j$ if the tie-breaking order for that object ranks $i$ ahead of $j$. We call an allocation \emph{Rawlsian equitable} if no agent Rawlsian envies another.

The reference to Rawls in the name of our equity notion is intended in a narrow sense. The Rawlsian principle of giving priority to the less advantaged \citep{rawls:58} can also be applied using cardinal measures of welfare. However, in the assignment problem, we work with ordinal preferences; therefore, we apply the Rawlsian principle in the ordinal domain. Even in the ordinal domain, ours is not the only way to apply the Rawlsian principle. Another interpretation is the leximin criterion, under which the goal is to make the worst assigned rank as favorable as possible, then the second-worst, and so on. Arguably, the leximin criterion reflects the Rawlsian principle more directly. However, under the leximin criterion, for an agent to determine whether her assignment is justified, she must compare entire allocations, which may not be easy when $n$ is large. In contrast, under Rawlsian equity, an agent need only compare her assignment with that of another agent to determine whether she Rawlsian envies that agent and, consequently, whether her assignment is justified. In this sense, Rawlsian equity is a local fairness criterion because it compares assignments pairwise, whereas the leximin criterion is global because it compares entire allocations.

The exogenous tie-breaking profile turns out to play a smaller role than one might expect. An allocation can be made Rawlsian equitable by an appropriate choice of the tie-breaking profile if and only if no agent ranks a contested object strictly worse than its holder does, and it is Rawlsian equitable under every choice of the tie-breaking profile if and only if, for every contested object, no agent has even a tied claim against its holder (\autoref{prop:tiebreak}). The tie-breaking profile thus has complete authority over equal claims and none at all over unequal ones.

Our notion of Rawlsian equity also has a two-sided matching interpretation: Given agents' preferences, construct an auxiliary matching market where each object has its own separate priority order. An agent receives a higher priority for an object if she ranks it lower in her preference ordering, with ties broken by the object's tie-breaking order. Agent $i$ then Rawlsian envies agent $j$ if and only if agent $i$ and the assigned object of agent $j$ form a blocking pair in this auxiliary market. Hence, the Rawlsian equitable allocations are exactly the stable allocations of this auxiliary market. Consequently, the set of Rawlsian equitable allocations is non-empty, forms a lattice, and contains an agent-optimal allocation, namely the one produced by the agent-proposing deferred acceptance algorithm (Theorems~\ref{thm:existence} and~\ref{thm:agent-optimal}). Thus, the study of the assignment problem under Rawlsian equity closely resembles the study of the school choice model under stability.

However, the resemblance to the school-choice model does not extend to incentives. In school choice, priorities are exogenous, and the deferred acceptance rule is strategy-proof for the proposing side \citep{dubins/freedman:81,roth:82}. In contrast, in our model, priorities in the auxiliary market are determined by agents' reported preferences (together with the tie-breaking profile). This gives an agent the power to misreport her preferences and thereby reshape priorities to her advantage. Consequently, the agent-proposing deferred acceptance rule is not strategy-proof (\autoref{cor:da-not-sp}). Indeed, we show that Rawlsian equity is incompatible with strategy-proofness (\autoref{prop:impossible}).

Rawlsian equity also conflicts with efficiency. We provide an example in which the agent-optimal Rawlsian equitable allocation is Pareto dominated and, consequently, no allocation is both Rawlsian equitable and Pareto efficient (\autoref{prop:da-notefficient} and \autoref{cor:no-eff-equitable}). We then ask what survives of the equity requirement once efficiency is imposed. We show that the full-consent efficiency-adjusted deferred acceptance rule of \citet{kesten:10a}, applied to the auxiliary market, selects the unique Rawlsian efficient allocation, which either coincides with the agent-optimal Rawlsian equitable allocation or Pareto dominates every Rawlsian equitable allocation (\autoref{thm:eada}).

Finally, we ask what distinguishes Rawlsian equity among fairness notions of its kind. Consider notions that adjudicate objections pairwise, based on the positions that the two agents assign to the contested object and nothing else. Four conditions single out Rawlsian equity within this class: every comparison of competing claims has exactly one winner, the agent who ranks the contested object lower prevails, the resolution of a tie does not depend on the position at which the tie occurs, and a fair allocation exists at every preference profile (\autoref{thm:rawlsian-characterization}). The tie-breaking profile is not assumed here but derived, because resolving equal claims cyclically would leave some preference profiles with no fair allocation.

Our paper connects to several lines of research. The closest antecedent is \citet{bogomolnaia/moulin:01}, who study fairness in the assignment problem using only ordinal information, as we do. Because no deterministic allocation of indivisible objects is generally envy-free, they restore fairness ex ante through randomization: the probabilistic serial rule is ordinally efficient and envy-free in the sense of first-order stochastic dominance. We take the complementary approach. Rather than enlarging the set of outcomes, we weaken the fairness requirement itself, grading agents' claims to an object by how they rank it, so that a fair deterministic allocation exists for every preference profile (\autoref{thm:existence}).

The closest criterion to ours is the rank envy-freeness of \citet{belahcene/mousseau/wilczynski:21}, who study the same environment and, like us, compare the positions two agents assign to a contested object. The two criteria read the comparison in opposite directions. Under rank envy-freeness, agent $i$'s objection to $j$ holding $o$ is upheld when $i$ ranks $o$ better than $j$ does; under Rawlsian equity, it is upheld when $i$ ranks $o$ worse, with equal ranks settled by the tie-breaking order. The difference is akin to the difference between fitness and compensation: the first standard awards the object to the agent more eager for it; the second awards it to the agent for whom it is the less favorable fallback.

There are also prior studies that invoke the Rawlsian principle. \citet{afacan/dur:24} study deterministic object allocation and call an allocation Rawlsian when neither the rank of the worst-off agent nor the number of agents assigned that rank can be improved. \citet{demeulemeester/pereyra:24} work in the random assignment model and select the assignment that lexicographically improves the position of the worst-off agent. Both criteria rank complete allocations, and both are incompatible with strategy-proofness, as is Rawlsian equity. Our notion differs in being objection-based, as it compares the assignments of pairs of agents rather than entire allocations. Relatedly, mechanisms are sometimes compared through the rank distributions they induce, for example, by the average assigned rank or by the rank of the worst-off agent \citep{ortega/klein:23}.

Our paper also relates to justified envy in priority-based matching \citep{balinski/sonmez:99,abdulkadiroglu/sonmez:03}. In that literature, priorities are exogenously given, whereas in our auxiliary market, reported preferences shape priorities. \citet{troyan/delacretaz/kloosterman:20} weaken stability by discounting claims whose enforcement sets off a chain of reassignments that costs the claimant the object she claimed, and connect this to \citet{kesten:10a}. Their notion discounts objections when honoring them would be self-defeating; ours discounts them when the claimant is not the worse-positioned party. Our efficiency analysis draws on \citet{tang/yu:14,reny:22,ehlers/morrill:20}.

\section{The model}\label{sec:model}

\subsection{Preliminaries}
An \textbf{assignment problem} consists of a set of agents $N=\{1,2,\allowbreak \ldots,n\}$ and a set of objects $O$. We assume that each agent $i\in N$ has a strict preference relation $P_i$ over $O$, which is a strict linear order. Throughout, we also assume that $|N| = |O|$; we discuss relaxing this assumption at the end of \autoref{sec:eada}.

We denote the profile of agent preferences by $P=(P_i)_{i\in N}$ and the set of strict preference relations over $O$ by $\mathcal{P}$, so $P_i\in\mathcal{P}$ and $P\in\mathcal{P}^n$. We write $o \mathrel{P_i} o'$ if agent $i$ prefers object $o$ to $o'$. We also present $P_i$ as an ordering, as in $o\succ o'\succ o''$, listing objects from the most-preferred to the least-preferred. We define the associated weak preference relation $R_i$ by $o \mathrel{R_i} o'$ if $o \mathrel{P_i} o'$ or $o=o'$.

An \textbf{allocation} $\mu:N\rightarrow O$ is a one-to-one, and hence onto, mapping from the set of agents to the set of objects. For $i\in N$, $\mu(i)\in O$ denotes agent $i$'s assignment under $\mu$. Likewise, for $o\in O$, $\mu^{-1}(o)\in N$ denotes the agent assigned to $o$. In examples, we label objects as $a$, $b$, $c$, and so forth, and we denote an allocation $\mu$ by listing each agent with her assignment, as in $(\ap{1}{a},\ap{2}{b},\ap{3}{c})$, which means that $\mu(1)=a$, $\mu(2)=b$, and $\mu(3)=c$. We denote the set of all allocations by $\mathcal{M}$.

A \textbf{rule} $\phi:\mathcal{P}^n\rightarrow\mathcal{M}$ is a mapping from the set of preference profiles to the set of allocations. It associates each profile $P\in \mathcal{P}^n$ with an allocation $\phi(P)\in \mathcal{M}$. For $i\in N$, $\phi_i(P)\in O$ denotes agent $i$'s assignment under $\phi(P)$.

An allocation $\mu\in \mathcal{M}$ is \textbf{Pareto efficient} at $P\in \mathcal{P}^n$ if there exists no allocation $\mu'\in \mathcal{M}$ such that $\mu'(i)\mathrel{R_i}\mu(i)$ for each $i \in N$, with strict preference for some agent.

Henceforth, we say that a rule $\phi$ satisfies a property of allocations if, for every $P\in\mathcal{P}^n$, the allocation $\phi(P)$ satisfies that property at $P$. For instance, $\phi$ is Pareto efficient if, for every $P\in\mathcal{P}^n$, $\phi(P)$ is Pareto efficient at $P$.

A rule $\phi$ is \textbf{strategy-proof} if no agent can obtain a preferred assignment by misreporting her preferences: for every $P\in\mathcal{P}^n$, every agent $i\in N$, and every $P'_i\in\mathcal{P}$,
\[ \phi_i(P)\mathrel{R_i}\phi_i(P'_i,P_{-i}), \]
where $(P'_i,P_{-i})$ denotes the profile obtained from $P$ when agent $i$'s preference relation $P_i$ is replaced by $P'_i$ and $P_{-i}=(P_j)_{j\neq i}$.

\subsection{Rawlsian equity}\label{sec:equity}

Given a preference relation $P_i$, let $\rank(P_i,o) = |\{o'\in O : o' \mathrel{R_i} o\}|$ denote the position of object $o$ at $P_i$. For instance, the most-preferred object at $P_i$ has rank $1$, the second most-preferred has rank $2$, and so on.

Rawlsian equity, our fairness notion, is inspired by the Rawlsian idea of according greater weight to the welfare of less advantaged agents: Suppose object $o$ is agent $i$'s fourth and agent $j$'s second choice. Rawlsian equity holds that placing agent $i$ at her fourth choice takes precedence over placing agent $j$ at her second, so $o$ should not be assigned to $j$ unless agent $i$ receives an object she prefers over $o$. Put differently, Rawlsian equity regards an agent who ranks an object less favorably as having a stronger claim to it and requires that stronger claims be respected. We formalize this idea next.

For an object $o\in O$ and distinct agents $i,j\in N$, agent $i$ has a \textbf{stronger claim} to $o$ than agent $j$ if $\rank(P_i,o) > \rank(P_j,o)$, and the two agents have \textbf{equal claims} to $o$ if $\rank(P_i,o) = \rank(P_j,o)$.

The strong version of Rawlsian equity rules out envy even under equal claims.

\begin{definition}[Absolute Rawlsian equity]
At a profile $P\in\mathcal{P}^n$ and an allocation $\mu\in\mathcal{M}$, agent $i\in N$ \textbf{weakly Rawlsian envies} agent $j\in N$ if $\mu(j) \mathrel{P_i} \mu(i)$ and agent $i$'s claim to $\mu(j)$ is stronger than or equal to agent $j$'s. An allocation $\mu\in\mathcal{M}$ is \textbf{absolutely Rawlsian equitable} at $P\in\mathcal{P}^n$ if no agent weakly Rawlsian envies another.
\end{definition}

However, this demanding fairness condition is not always attainable: for a given preference profile, an absolutely Rawlsian equitable allocation may not exist. To see this, suppose there are two agents and two objects, $a$ and $b$, with $a \mathrel{P_1} b$ and $a \mathrel{P_2} b$. Then, under either allocation, the agent who receives $b$ weakly Rawlsian envies the agent who receives $a$: she prefers $a$ to $b$, and the two agents have equal claims to $a$. It turns out that the impossibility stems precisely from these unresolved ties.

In applications, equally deserving agents routinely have competing claims to the same object. Institutions resolve such conflicts through an exogenous tie-breaking order. The tie-breaker may be the waiting time when allocating public housing units to families, or it may be a lottery number when allocating school seats to students. Our next notion of Rawlsian equity reflects this institutional practice: when agents have equal claims, ties are resolved according to an exogenously given tie-breaking profile.

A \textbf{tie-breaking profile} $\pi=(\pi_o)_{o \in O}$ consists of a \textbf{tie-breaking order} $\pi_o$ of the agents for each object $o$, where each tie-breaking order is a strict linear order on $N$. We write $\pi_o(i)=k$ if agent $i\in N$ is ordered $k^{\text{th}}$ under $\pi_o$. The set of strict linear orders on $N$ is denoted by $\Pi$, so $\pi_o\in\Pi$ and $\pi\in\Pi^{n}$. As with preferences, we also present a strict linear order on $N$ as an ordering, as in $i\succ j\succ k$.

The next version of Rawlsian equity we introduce is weaker because it permits envy under equal claims if the tie-breaking order ranks the envious agent below the agent holding the object.

\begin{definition}[Rawlsian equity]
At a profile $P\in\mathcal{P}^n$, an allocation $\mu\in\mathcal{M}$, and a tie-breaking profile $\pi\in\Pi^n$, agent $i\in N$ \textbf{Rawlsian envies} agent $j\in N$ at $\pi$ if $\mu(j) \mathrel{P_i} \mu(i)$, and either agent $i$'s claim to $\mu(j)$ is stronger than agent $j$'s, or the two agents have equal claims to $\mu(j)$ and $\pi_{\mu(j)}(i) < \pi_{\mu(j)}(j)$. An allocation $\mu\in\mathcal{M}$ is \textbf{Rawlsian equitable} at $\pi$ if no agent Rawlsian envies another at $\pi$.
\end{definition}

Rawlsian envy is the form of envy that we regard as justified. Under allocation $\mu$, agent $i$ envies agent $j$ when she prefers $j$'s object, $\mu(j)$, to her own object, $\mu(i)$. She may complain that $\mu(j)$ has not been assigned to her, but whether this complaint is justified depends on the two agents' claims to the contested object, $\mu(j)$. If agent $i$'s claim is stronger, the complaint is justified because she is the more disadvantaged agent. If the two agents have equal claims, the tie-breaking profile settles the matter: the complaint is justified if the tie-breaking order for $\mu(j)$ ranks $i$ ahead of $j$.

For instance, let $N=\{1,2,3\}$ and $O=\{a,b,c\}$. Let $P_1: b\succ a\succ c$, $P_2: a\succ b\succ c$, and $P_3: c\succ a\succ b$. Then, the allocation $(\ap{1}{b}, \ap{2}{a}, \ap{3}{c})$ is Rawlsian equitable at every tie-breaking profile since it assigns each agent her top choice. In contrast, the allocation $(\ap{1}{c}, \ap{2}{a}, \ap{3}{b})$ is Rawlsian equitable at no tie-breaking profile: agent $1$ prefers $a$ to her assigned object $c$, and her claim to $a$ is stronger than agent $2$'s, so she Rawlsian envies agent $2$ regardless of the tie-breaking profile.

Justified envy always reflects a genuine disadvantage. Indeed, combining $\mu(j) \mathrel{P_i} \mu(i)$ with the comparison of claims yields
\[ \rank(P_i,\mu(i)) > \rank(P_i,\mu(j)) \geq \rank(P_j,\mu(j)), \]
so an agent who Rawlsian envies another receives an object that she ranks strictly worse than the other agent ranks her own object.

Two features of the definition deserve emphasis. First, Rawlsian envy compares ranks across agents: whether agent $i$'s complaint against agent $j$ is justified turns on how the two agents rank the contested object. Treating positions in different preference orders as comparable is a substantive assumption, which we adopt because ranks are the only interpersonally comparable information that ordinal reports contain about how much an agent values an object. Second, the comparison concerns the contested object alone, not the agents' overall positions. Suppose agent $i$ receives her fifth choice and prefers agent $j$'s object, which she ranks first and agent $j$ ranks second. Although agent $i$ is worse off by assigned rank, she has no Rawlsian claim: with respect to the contested object, she is the agent the fitness principle would favor, whereas compensation sides with the agent for whom the object is the less favorable fallback. Objections grounded in overall positions belong to global criteria such as leximin; Rawlsian equity confines the comparison to the object in dispute, which is what makes it a local criterion.

Weak Rawlsian envy and Rawlsian envy differ only in their treatment of equal claims. The first counts envy under equal claims as justified; the second counts it only if the tie-breaking order ranks the envier ahead of the holder of the envied object. \autoref{prop:tiebreak} shows how the relationship between these two fairness notions depends on the tie-breaking profile.

\begin{proposition}[Role of the tie-breaking profile]\label{prop:tiebreak}
Let $P\in\mathcal{P}^n$ and $\mu\in\mathcal{M}$. Then $\mu$ is Rawlsian equitable at every tie-breaking profile if and only if it is absolutely Rawlsian equitable, and $\mu$ is Rawlsian equitable at some tie-breaking profile if and only if no agent prefers another agent's object to her own while having a stronger claim to it.
\end{proposition}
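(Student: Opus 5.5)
Both equivalences follow directly from the definitions; the only real work is choosing the right tie-breaking profile in each direction. I will handle the two statements separately. In each, one direction is immediate because Rawlsian envy at any $\pi$ implies weak Rawlsian envy. For the other direction I build a tie-breaking profile by hand, either to expose a tied objection or to neutralize every tied objection.

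For the first equivalence, suppose first that $\mu$ is absolutely Rawlsian equitable. If agent $i$ Rawlsian envied $j$ at some $\pi$, then $\mu(j)\mathrel{P_i}\mu(i)$ and $i$'s claim to $\mu(j)$ would be stronger than or equal to $j$'s. That is weak Rawlsian envy, a contradiction, so $\mu$ is Rawlsian equitable at every $\pi$. Conversely, suppose $\mu$ is not absolutely Rawlsian equitable, so some $i$ weakly Rawlsian envies some $j$. If $i$'s claim to $\mu(j)$ is strictly stronger, then $i$ Rawlsian envies $j$ at every $\pi$, and in particular at some $\pi$. If the claims are equal, I choose a $\pi$ with $\pi_{\mu(j)}(i)<\pi_{\mu(j)}(j)$, for instance by placing $i$ first in $\pi_{\mu(j)}$, with the other orders arbitrary. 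Then $i$ Rawlsian envies $j$ at this $\pi$. So $\mu$ fails to be Rawlsian equitable at some $\pi$, which proves the contrapositive.

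For the second equivalence, the necessity direction is clear. If some $i$ prefers $\mu(j)$ to $\mu(i)$ and has a strictly stronger claim to it, then $i$ Rawlsian envies $j$ at every $\pi$, so $\mu$ is Rawlsian equitable at no $\pi$. For sufficiency, I construct a single $\pi$ that favors current holders: for each object $o$, let $\pi_o$ rank $\mu^{-1}(o)$ first and order the remaining agents arbitrarily. Suppose $i$ Rawlsian envied $j$ at this $\pi$. By hypothesis the strict-claim case is excluded, so the claims to $\mu(j)$ must be equal and $\pi_{\mu(j)}(i)<\pi_{\mu(j)}(j)$. This is impossible because $j=\mu^{-1}(\mu(j))$ is ranked first by $\pi_{\mu(j)}$. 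Hence $\mu$ is Rawlsian equitable at this $\pi$.

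There is no real obstacle. The only point needing care is that the profile $\pi$ in the sufficiency part must be chosen once, uniformly for all pairs. Putting the holder of each object first in that object's order does this, since $\mu$ is a bijection and each object has exactly one holder.
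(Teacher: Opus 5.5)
Your proof is correct and follows the paper's argument essentially step for step. You use the same split into strictly stronger versus equal claims for the first equivalence, choosing $\pi_{\mu(j)}$ to place $i$ ahead of $j$ in the tie case, and the same holder-first profile (each $\pi_o$ ranks $\mu^{-1}(o)$ first) for the sufficiency half of the second equivalence, with only the cosmetic difference that you argue one direction by contrapositive rather than by contradiction.
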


All omitted proofs are in the Appendix.

The proposition parallels two standard results concerning stable matchings in the presence of indifferences. Order the agents at each object by their claims, leaving equal claims tied: object priorities become weak orders, agent preferences remain strict, and Rawlsian equity at a tie-breaking profile is stability under the corresponding refinement of the tied priorities. It then restates, for this market, the facts that a matching is weakly stable if and only if it is stable under some refinement and super-stable if and only if it is stable under every refinement \citep{irving:94,manlove:02}; in particular, absolute Rawlsian equity is super-stability. The parallel ends there: that theory takes the weak priorities as given, whereas ours arise from the reported preferences themselves, and this endogeneity is responsible for the incentive results below. That literature also offers no analog of our characterization.

In the rest of the paper, we fix the tie-breaking profile $\pi\in\Pi^n$ and suppress it in the notation unless it is explicitly needed.

\section{Results}\label{sec:results}

\subsection{Existence and lattice structure}\label{sec:lattice}
Even though an absolutely Rawlsian equitable allocation need not exist, a Rawlsian equitable allocation always does.

\begin{theorem}[Existence]\label{thm:existence} For each $P\in\mathcal{P}^n$, there exists an allocation that is Rawlsian equitable at $P$. \end{theorem}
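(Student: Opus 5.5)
The plan is to reduce existence to the existence of stable matchings in an auxiliary one-to-one two-sided market. That market is the one sketched in the introduction.

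First I construct the market. Fix $P$ and the tie-breaking profile $\pi$. For each object $o\in O$, define a strict priority order $\succ_o$ on $N$ by letting $i \succ_o j$ if and only if either $\rank(P_i,o) > \rank(P_j,o)$, or $\rank(P_i,o)=\rank(P_j,o)$ and $\pi_o(i)<\pi_o(j)$. I will check that $\succ_o$ is a strict linear order on $N$. It is the lexicographic combination of the weak order induced by ranks at $o$ (a higher rank number comes first) with the strict order $\pi_o$. It is therefore complete, asymmetric and transitive. Agents keep their preferences $P_i$ over objects, and each object has capacity one. Since $|N|=|O|$, this is a balanced marriage market.

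Second I show the key equivalence. For distinct $i,j$, agent $i$ Rawlsian envies agent $j$ at $\pi$ under $\mu$ if and only if $(i,\mu(j))$ is a blocking pair of $\mu$ in the auxiliary market. The pair blocks exactly when $\mu(j) \mathrel{P_i} \mu(i)$ and $i \succ_{\mu(j)} \mu^{-1}(\mu(j)) = j$. Unpacking $\succ_{\mu(j)}$ gives precisely the disjunction in the definition of Rawlsian envy. This step is immediate from the definitions.

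Two further points need checking. Because $\mu$ is a bijection between equal-sized sets with every agent and object matched, there are no individual-rationality issues: every object is acceptable to every agent and vice versa. Every potential blocking pair $(i,o)$ with $o\neq\mu(i)$ has $o=\mu(j)$ for the unique $j=\mu^{-1}(o)$. Hence $\mu$ is Rawlsian equitable at $\pi$ if and only if it is stable in the auxiliary market.

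Finally I invoke the theorem of Gale and Shapley that every marriage market with strict preferences on both sides has a stable matching. Concretely, agent-proposing deferred acceptance with priorities $\succ_o$ produces one. With complete lists and $|N|=|O|$, that stable matching is perfect, so it is an allocation in $\mathcal{M}$. It is therefore Rawlsian equitable at $P$.

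There is no real obstacle here. The only step needing care is verifying that the tie-breaking makes each $\succ_o$ strict, so that the classical existence theorem applies. Without ties being broken, only weak stability would be guaranteed, which is the failure illustrated by absolute Rawlsian equity.
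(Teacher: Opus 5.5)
Your proposal is correct and follows the paper's own route. The paper proves existence inside the proof of \autoref{thm:agent-optimal}: it builds the priority profile $\pi^{P}$ exactly as your $\succ_o$, shows that $i$ Rawlsian envies $j$ if and only if $(i,\mu(j))$ blocks $\mu$ under $\pi^{P}$, and then invokes Gale and Shapley's existence theorem via agent-proposing deferred acceptance.
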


This result and its generalization below both rest on an equivalence between Rawlsian equity in the assignment problem and stability in an auxiliary matching market, in the sense of \citet{gale/shapley:62}. The two problems consist of the same agents, objects, and agent preferences, but differ in the information associated with each object: in our setting, each object is endowed with a tie-breaking order that separates agents with equal claims, whereas in the auxiliary matching market, each object has a priority ranking over agents, constructed from the agent preferences and the tie-breaking order.

For each $P\in\mathcal{P}^n$, construct a \textbf{priority profile} $\pi^{P}=(\pi^{P}_o)_{o\in O}$, where each $\pi^{P}_o$ is a priority order over agents for object $o$; we write $i \mathrel{\pi^{P}_o} j$ if agent $i$ has higher priority than agent $j$ for $o$. For each $o\in O$ and distinct $i,j\in N$, set $i \mathrel{\pi^{P}_o} j$ if agent $i$ has a stronger claim to $o$ than agent $j$, or if the two agents have equal claims to $o$ and $\pi_o(i) < \pi_o(j)$. In words, $\pi^{P}_o$ ranks agents by the strength of their claims to $o$, breaking ties according to the tie-breaking order $\pi_o$. Because $\pi^{P}_o$ orders agents lexicographically, first by the strength of their claims and then by $\pi_o$, it is itself a strict linear order. Hence, $\pi^{P}\in\Pi^n$.

The \textbf{agent-proposing deferred acceptance algorithm}, run with the priority profile $\pi^{P}$, proceeds as follows.
\begin{itemize}
\item \emph{Round $1$.} Each agent proposes to her most-preferred object. Each object provisionally holds the highest-priority proposer according to $\pi^{P}_o$ and rejects the rest.
\item \emph{Round $k>1$.} Each agent rejected in round $k-1$ proposes to her most-preferred object among those that have not yet rejected her. Each object provisionally holds, among its currently held agent and the agents newly proposing to it, the one with the highest $\pi^{P}_o$-priority, and rejects the rest.
\end{itemize}

The rounds continue until no agent is rejected, at which point the provisional holdings become final. Since $|N|=|O|$ and every agent ranks all objects, the algorithm always produces an allocation, which we denote by $\da(P)$. The corresponding allocation rule is denoted by $\da$.

We compare allocations by agents' preferences over their assignments: for $\mu,\mu'\in\mathcal{M}$, write $\mu\succeq_N\mu'$ if $\mu(i)\mathrel{R_i}\mu'(i)$ for every $i\in N$. A set $\mathcal{A}\subseteq\mathcal{M}$ forms a \textbf{lattice} under $\succeq_N$ if, for all $\mu,\mu'\in\mathcal{A}$, the mapping that gives each agent the more-preferred of $\mu(i)$ and $\mu'(i)$, and the mapping that gives each agent the less-preferred of $\mu(i)$ and $\mu'(i)$, are themselves allocations in $\mathcal{A}$. (These are the join and the meet of $\mu$ and $\mu'$ under $\succeq_N$.) An allocation is \textbf{agent-optimal} in $\mathcal{A}$ if it belongs to $\mathcal{A}$ and every agent weakly prefers it to every other allocation in $\mathcal{A}$. Since preferences are strict, there is at most one agent-optimal allocation in $\mathcal{A}$.

\begin{theorem}[Lattice structure]\label{thm:agent-optimal} For each $P\in\mathcal{P}^n$, the set of Rawlsian equitable allocations at $P$ is a non-empty lattice under $\succeq_N$, and $\da(P)$ is the agent-optimal Rawlsian equitable allocation. \end{theorem}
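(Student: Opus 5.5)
The plan is to reduce Theorem~\ref{thm:agent-optimal} to the classical theory of one-to-one stable matching \citep{gale/shapley:62} via the auxiliary market $(N,O,P,\pi^{P})$.

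\textbf{Step 1: equivalence.} Recall that $\mu$ is stable in the auxiliary market if there is no pair $(i,o)$ with $o \mathrel{P_i} \mu(i)$ and $i \mathrel{\pi^{P}_o} \mu^{-1}(o)$. Since $|N|=|O|$ and every allocation is a bijection, individual rationality is automatic, so stability is exactly the absence of such blocking pairs. Fix distinct $i,j$ and set $o=\mu(j)$. By construction of $\pi^{P}_o$, we have $i \mathrel{\pi^{P}_o} j$ if and only if either $i$ has a stronger claim to $o$, or the claims are equal and $\pi_o(i)<\pi_o(j)$. Hence agent $i$ Rawlsian envies $j$ if and only if $(i,\mu(j))$ blocks $\mu$. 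Every pair $(i,o)$ with $o\neq\mu(i)$ has this form, with $j=\mu^{-1}(o)$. Therefore the Rawlsian equitable allocations at $P$ are precisely the stable matchings of the auxiliary market.

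\textbf{Step 2: import classical results.} The auxiliary market is a standard one-to-one marriage market with strict preferences on both sides. Agents' preferences are $P_i$, and objects' preferences are the strict linear orders $\pi^{P}_o$; Step~1 notes these are strict. Three classical facts then apply:
\begin{itemize}
\item Gale and Shapley: the agent-proposing deferred acceptance algorithm produces a stable matching that every agent weakly prefers to every other stable matching.
\item Conway's lattice theorem, as reported by Knuth: for stable $\mu,\mu'$, the pointwise agent-join and agent-meet are stable matchings.
\end{itemize}
Non-emptiness also follows from the first fact, which reproves Theorem~\ref{thm:existence}. The algorithm described in the text is exactly Gale--Shapley run on this market, so its output is $\da(P)$.

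\textbf{Main obstacle.} The only step needing care is the lattice claim, specifically that the pointwise join and meet are \emph{allocations}, i.e.\ bijections. If I prove it directly, I would follow Knuth's argument.
\begin{itemize}
\item Let $\lambda$ give each agent the better of $\mu(i)$ and $\mu'(i)$.
\item Suppose $\lambda(i)=\lambda(k)=o$ with $i\neq k$. Then, say, $o=\mu(i)=\mu'(k)$, with $o \mathrel{P_k} \mu(k)$ and $o \mathrel{R_i} \mu'(i)$; the relation is strict because $\mu'(k)=o$ forces $\mu'(i)\neq o$.
\item Stability of $\mu$ against $(k,o)$ gives $i \mathrel{\pi^{P}_o} k$. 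Stability of $\mu'$ against $(i,o)$ gives $k \mathrel{\pi^{P}_o} i$. This is a contradiction, so $\lambda$ is injective, and hence a bijection because $|N|=|O|$.
\end{itemize}
Stability of $\lambda$ follows by the standard argument. Any blocking pair $(i,o)$ of $\lambda$ has $o$ held by some agent $k$ under $\lambda$, where $\lambda(k)$ equals $\mu(k)$ or $\mu'(k)$. Also, $i$ prefers $o$ to both $\mu(i)$ and $\mu'(i)$. So $(i,o)$ would block $\mu$ or $\mu'$, respectively. The meet is handled by the symmetric argument, which uses the opposition of interests between the two sides: objects weakly prefer the agent-worse matching. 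Combining Steps~1 and~2 yields the theorem.
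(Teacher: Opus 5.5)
Your proposal is correct and follows the paper's proof. You identify Rawlsian envy of $i$ toward $j$ with the blocking pair $(i,\mu(j))$ in the auxiliary market with strict priorities $\pi^{P}$, then import Gale--Shapley for non-emptiness and agent-optimality of $\da(P)$ and the Conway--Knuth lattice theorem, exactly as the paper does. Your extra direct check of the join is sound, but the meet is not literally symmetric: the pairwise blocking argument gives no contradiction there. With $|N|=|O|$, the meet's bijectivity follows by counting from that of the join, and its stability from the Conway--Knuth decomposition lemma, so the citation you already rely on covers it.
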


Because \autoref{thm:agent-optimal} asserts in particular that the set of Rawlsian equitable allocations is non-empty, it generalizes \autoref{thm:existence}.

\begin{proof}
Fix a priority profile $\rho\in\Pi^n$. An allocation $\mu$ is \textbf{stable} with respect to $\rho$ if there is no pair $(i,o)$ with $o \mathrel{P_i} \mu(i)$ and $i \mathrel{\rho_o} \mu^{-1}(o)$; that is, no agent $i$ prefers some object $o$ to her assignment while having a higher $\rho_o$-priority than the agent who holds $o$. Such a pair is a \textbf{blocking pair}.

We claim that $\mu$ is Rawlsian equitable at $P$ if and only if $\mu$ is stable with respect to $\pi^{P}$. Consider distinct agents $i,j\in N$ and let $o=\mu(j)$. By construction, $i \mathrel{\pi^{P}_o} j$ if and only if $i$ has a stronger claim to $o$ than agent $j$, or the two agents have equal claims to $o$ and $\pi_o(i) < \pi_o(j)$. Given that $\mu(j) \mathrel{P_i} \mu(i)$, this is exactly the condition under which $i$ Rawlsian envies $j$. Hence, agent $i$ Rawlsian envies agent $j$ if and only if $(i,\mu(j))$ is a blocking pair for $\mu$ with respect to $\pi^{P}$. Therefore, $\mu$ has no Rawlsian envy pair if and only if it has no blocking pair.

Since $\pi^{P}\in\Pi^n$, the set of allocations stable with respect to $\pi^{P}$ is non-empty, and agent-proposing deferred acceptance, run with priorities $\pi^{P}$, produces the allocation $\da(P)$, which is agent-optimal within that set \citep{gale/shapley:62}. Moreover, the set of stable allocations forms a lattice under $\succeq_N$ \citep{knuth:76}. By the equivalence, the set of Rawlsian equitable allocations at $P$ coincides with the set of stable allocations. Therefore, the set of Rawlsian equitable allocations also forms a lattice under $\succeq_N$, and $\da(P)$ is the agent-optimal Rawlsian equitable allocation.
\end{proof}

\subsection{Incentives}\label{sec:incentives}

Because Rawlsian equity coincides with stability in this auxiliary market, the structural properties of the set of stable allocations at a fixed profile transfer directly, as illustrated by its lattice structure and the agent-optimality of $\da(P)$. Properties that compare across profiles, however, transfer less readily since changing $P$ alters both the agents' preferences and the constructed priorities $\pi^{P}$. Incentives are the leading example: in a market with fixed priorities, agent-proposing deferred acceptance is strategy-proof \citep{dubins/freedman:81,roth:82}, but here a misreport reshapes the priorities themselves. The difficulty is not specific to $\da$, because Rawlsian equity and strategy-proofness are incompatible in general.

\begin{proposition}[Incompatibility with strategy-proofness]\label{prop:impossible}
There exist an assignment problem and a tie-breaking profile for which no Rawlsian equitable rule is strategy-proof.
\end{proposition}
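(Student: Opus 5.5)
The plan is to exhibit one concrete small market, say $n=3$ with objects $a,b,c$ and a fixed tie-breaking profile $\pi$, together with a preference profile $P$, and to show that every Rawlsian equitable rule $\phi$ can be manipulated at $P$ or at a profile close to $P$. The case $n=2$ cannot work. There the Rawlsian equitable allocation is always unique: either the two agents' top choices differ and each gets her top, or the tie at the common top is settled by $\pi$. One checks directly that no misreport helps in that case. So I would work with $n=3$.

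The key tool is the equivalence in the proof of \autoref{thm:agent-optimal}: at any profile $Q$, the Rawlsian equitable allocations are exactly the stable allocations for the priority profile $\pi^{Q}$. For each profile I use, I would write down $\pi^{Q}$ explicitly. I would then list all stable allocations, which is a finite check over the $3!=6$ allocations, or more quickly take $\da(Q)$ and the object-optimal stable allocation and the lattice between them.

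The mechanism to exploit is that a misreport changes priorities. By moving an object down in her report, an agent strengthens her claim to it. By moving an object she does not want up, she weakens rivals' relative claims elsewhere.

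I would look for a profile $P$ with the following structure. Either the Rawlsian equitable allocation at $P$ is unique, or it has exactly two elements, $\mu$ and $\mu'$. At each of these, some agent ($i$ for $\mu$, $j$ for $\mu'$) has a misreport $P'_i$ (respectively $P'_j$). Under that misreport, every Rawlsian equitable allocation at the new profile gives her an object she truly prefers to her assignment under $\mu$ (respectively $\mu'$).

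With two candidates the argument runs as follows. Any Rawlsian equitable rule must pick $\phi(P)\in\{\mu,\mu'\}$. If $\phi(P)=\mu$, agent $i$ profits from $P'_i$, whatever $\phi$ selects at $(P'_i,P_{-i})$. If $\phi(P)=\mu'$, agent $j$ profits from $P'_j$. Either way strategy-proofness fails. It is essential that the deviation profiles have all their Rawlsian equitable allocations favorable to the deviator; ideally each has a unique Rawlsian equitable allocation, so the rule's selection there does not matter.

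The main obstacle is finding the specific profile, and I would start the search as follows. Take two agents sharing a top object $a$, and a third agent whose claim structure creates a rotation, so that there are two stable matchings. Such rotations arise when priorities conflict with preferences in a cycle: agent 1 prefers $a$ but has low priority there, while agent 2 prefers $b$ and has low priority there. Next, let an agent report her true top object at rank 3. This gives her top priority for it, possibly while still tied with others on that object. I would check that after the report, deferred acceptance and the reverse stable matching coincide, which gives uniqueness and thus the favorable outcome for the deviator. If a search with all ties broken identically fails, I would vary $\pi$ object by object, since the statement only requires some tie-breaking profile.
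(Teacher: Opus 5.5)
There is a genuine gap. Your reduction is sound as far as it goes: the equitable allocations at any profile $Q$ are the stable allocations for $\pi^{Q}$, and if every equitable allocation at the deviation profile gives the deviator an object she truly prefers, the rule's choice there is irrelevant. But the statement is existential, so its entire content is the witness, and your proposal stops exactly where the proof must begin. You describe how you would search for a profile but never exhibit one or verify it, so as written nothing has been proved.

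The paper's witness comes from the mechanism you name yourself (demoting an object strengthens your claim to it), applied as a one-position demotion. Take $\pi_o:1\succ2\succ3$ for every $o$ and let all agents report $a\succ b\succ c$. All claims are then equal, so $\pi^{P}=\pi$, and the unique equitable allocation is the serial-dictatorship outcome $(\ap{1}{a},\ap{2}{b},\ap{3}{c})$. If agent $2$ instead reports $b\succ a\succ c$, her claim to $a$ becomes strictly stronger than those of agents $1$ and $3$, while theirs to $b$ become strictly stronger than hers. Checking the six allocations (or running deferred acceptance from both sides) shows that $(\ap{1}{b},\ap{2}{a},\ap{3}{c})$ is then the unique equitable allocation. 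Hence under every equitable rule agent $2$ gains $a$ in place of $b$.

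Your suggested move of reporting the true top at rank $3$ fails at this very profile. If agent $2$ reports $b\succ c\succ a$, she applies to $c$ before $a$. The reported profile then has two equitable allocations, $(\ap{1}{a},\ap{2}{c},\ap{3}{b})$ and $(\ap{1}{b},\ap{2}{a},\ap{3}{c})$. The first, which $\da$ selects, leaves her with $c$, which is precisely the failure your own ``essential'' condition warns about.

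Separately, your reasoning for $n=2$ is wrong. When the tops differ, the swapped allocation is also equitable, because each agent's claim to the other's object is weaker than the holder's; so the equitable set need not be a singleton. Your conclusion about $n=2$ still holds. The rule that gives each agent her top when the tops differ, and breaks a common top by $\pi$, is equitable and strategy-proof.
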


\begin{proof} Let $N=\{1,2,3\}$ and $O=\{a,b,c\}$, and let $\pi$ be the tie-breaking profile with $\pi_o:1\succ2\succ3$ for each $o\in O$.

Consider first the profile $P^{0}$ at which every agent has the ordering $a\succ b\succ c$. The unique Rawlsian equitable allocation at $P^{0}$ is $(\ap{1}{a},\ap{2}{b},\ap{3}{c})$: agent $1$ must receive $a$, since otherwise she Rawlsian envies its holder, with whom she has equal claims to $a$ and whom she precedes under $\pi_a$; and once agent $1$ holds $a$, agent $2$ must receive $b$ for the same reason.

Now let $P^{1}$ be the profile
\[ P^{1}_1, P^{1}_3:\ a\succ b\succ c, \qquad P^{1}_2:\ b\succ a\succ c. \]

The unique Rawlsian equitable allocation at $P^{1}$ is $(\ap{1}{b},\ap{2}{a},\ap{3}{c})$. Each of the other five allocations has a Rawlsian envy pair: at $(\ap{1}{a},\ap{2}{b},\ap{3}{c})$, agent $3$ Rawlsian envies agent $2$ over $b$; at $(\ap{1}{a},\ap{2}{c},\ap{3}{b})$, agent $2$ Rawlsian envies agent $1$ over $a$; at $(\ap{1}{b},\ap{2}{c},\ap{3}{a})$, agent $2$ Rawlsian envies agent $3$ over $a$; at $(\ap{1}{c},\ap{2}{a},\ap{3}{b})$, agent $1$ Rawlsian envies agent $3$ over $b$, since the two have equal claims to $b$ and agent $1$ precedes agent $3$ under $\pi_b$; and at $(\ap{1}{c},\ap{2}{b},\ap{3}{a})$, agent $1$ Rawlsian envies agent $2$ over $b$. By \autoref{thm:existence}, some allocation is Rawlsian equitable at $P^{1}$, so $(\ap{1}{b},\ap{2}{a},\ap{3}{c})$ is that allocation.

Therefore, every Rawlsian equitable rule assigns agent $2$ the object $b$ at $P^{0}$ and the object $a$ at $P^{1}$. Suppose the true profile is $P^{0}$. If agent $2$ misreports her preferences as
\[P'_2:\ b\succ a\succ c,\]
she changes the reported profile from $P^{0}$ to $P^{1}$, and her assignment changes from $b$ to $a$. Since agent $2$ prefers $a$ to $b$, she is better off after the misreport, contradicting strategy-proofness. Therefore, there is no Rawlsian equitable rule that is strategy-proof.
\end{proof}

In particular, since $\da$ is a Rawlsian equitable rule by \autoref{thm:agent-optimal}, \autoref{prop:impossible} yields the following.
\begin{corollary}[Manipulability of DA]\label{cor:da-not-sp}
There exist an assignment problem and a tie-breaking profile for which $\da$ is not strategy-proof.
\end{corollary}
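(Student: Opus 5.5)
The plan is to derive Corollary~\ref{cor:da-not-sp} directly from Proposition~\ref{prop:impossible}, using Theorem~\ref{thm:agent-optimal} to place $\da$ among the rules that proposition covers. I will fix the assignment problem and tie-breaking profile from the proof of Proposition~\ref{prop:impossible}: $N=\{1,2,3\}$, $O=\{a,b,c\}$, and $\pi_o:1\succ2\succ3$ for every $o$. By Theorem~\ref{thm:agent-optimal}, $\da(P)$ is Rawlsian equitable at every profile $P$, so $\da$ is a Rawlsian equitable rule. Proposition~\ref{prop:impossible} says no such rule is strategy-proof for this problem and tie-breaking profile, so $\da$ is not strategy-proof there.

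To make the violation explicit, I will reuse the two profiles from that proof. At $P^0$ every agent reports $a\succ b\succ c$. At $P^1$ agent $2$ instead reports $b\succ a\succ c$. That proof shows each profile has a unique Rawlsian equitable allocation, namely $(\ap{1}{a},\ap{2}{b},\ap{3}{c})$ at $P^0$ and $(\ap{1}{b},\ap{2}{a},\ap{3}{c})$ at $P^1$. Hence $\da$ must select these allocations, and $\da_2(P^0)=b$ while $\da_2(P^1)=a$. Since agent $2$ truly prefers $a$ to $b$ at $P^0$, the misreport $P'_2:b\succ a\succ c$ is profitable.

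As an optional sanity check, one can run the algorithm at $P^1$. In round 1, agents $1$ and $3$ propose to $a$ and agent $2$ proposes to $b$. Object $a$ holds agent $1$, since she ties with agent $3$ at rank $1$ and wins the tie-break. In round 2, agent $3$ proposes to $b$. She ranks $b$ second and agent $2$ ranks it first, so agent $3$ has the stronger claim and $b$ holds her while rejecting agent $2$. In round 3, agent $2$ proposes to $a$, ranking it second against agent $1$'s first, so $a$ holds agent $2$ and rejects agent $1$. In round 4, agent $1$ proposes to $b$. Both agents $1$ and $3$ rank $b$ second, and agent $1$ wins the tie-break, so $b$ holds agent $1$ and rejects agent $3$. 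In round 5, agent $3$ proposes to $c$ and is held. The outcome is $(\ap{1}{b},\ap{2}{a},\ap{3}{c})$, as predicted.

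No real obstacle arises. The only point that needs care is that uniqueness of the Rawlsian equitable allocation at each profile, already established in the proof of Proposition~\ref{prop:impossible}, pins down $\da$'s output.
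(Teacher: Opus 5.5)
Your proposal is correct and follows the paper's argument: $\da$ is a Rawlsian equitable rule by \autoref{thm:agent-optimal}, so \autoref{prop:impossible} applies to it directly. Your explicit manipulation via $P^0$ and $P^1$ is accurate, and so is your hand run of deferred acceptance at $P^1$, which yields $(\ap{1}{b},\ap{2}{a},\ap{3}{c})$; both are useful checks but not needed for the corollary.
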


\subsection{Efficiency}\label{sec:eada}

Rawlsian equity also conflicts with efficiency.

\begin{proposition}[Pareto inefficiency of DA]\label{prop:da-notefficient}
There exist an assignment problem, a preference profile, and a tie-breaking profile at which the agent-optimal Rawlsian equitable allocation is not Pareto efficient.
\end{proposition}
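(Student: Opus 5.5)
The plan is to exhibit a small explicit example, in the spirit of the proof of \autoref{prop:impossible}. We construct the auxiliary priority profile $\pi^{P}$, run agent-proposing deferred acceptance to obtain $\da(P)$, which by \autoref{thm:agent-optimal} is the agent-optimal Rawlsian equitable allocation, and then show that a swap between two agents Pareto dominates it. The mechanism is the familiar one from school choice. A third agent's proposal triggers a rejection chain that leaves two agents holding each other's top choices, even though the third agent never ends up with either of those objects.

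Concretely, I would take $N=\{1,2,3\}$, $O=\{a,b,c\}$ and the tie-breaking profile with $\pi_o:1\succ2\succ3$ for every $o\in O$. The preference profile is
\[ P_1,P_3:\ a\succ b\succ c, \qquad P_2:\ b\succ a\succ c. \]
Computing claims gives the following priorities:
\begin{itemize}
\item[] for $a$, agent $2$ ranks it second and agents $1,3$ rank it first, so $\pi^{P}_a: 2\succ 1\succ 3$;
\item[] for $b$, agents $1,3$ rank it second and agent $2$ ranks it first, so $\pi^{P}_b: 1\succ 3\succ 2$.
\end{itemize}

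Running deferred acceptance then proceeds as follows.
\begin{itemize}
\item[] Agents $1$ and $3$ propose to $a$, and agent $2$ proposes to $b$. Object $a$ keeps $1$ and rejects $3$.
\item[] Agent $3$ proposes to $b$, which prefers $3$ to $2$ and so rejects $2$.
\item[] Agent $2$ proposes to $a$, which prefers $2$ to $1$ and so rejects $1$.
\item[] Agent $1$ proposes to $b$, which prefers $1$ to $3$ and so rejects $3$.
\item[] Agent $3$ proposes to $c$, and the algorithm stops.
\end{itemize}
Hence $\da(P)=(\ap{1}{b},\ap{2}{a},\ap{3}{c})$.

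This allocation is Pareto dominated by $(\ap{1}{a},\ap{2}{b},\ap{3}{c})$. Agents $1$ and $2$ each move from their second choice to their first, and agent $3$ is unaffected. As a sanity check, this dominating allocation is not Rawlsian equitable, since agent $3$ Rawlsian envies agent $2$ over $b$ (ranks $2>1$). This is consistent with the example also giving \autoref{cor:no-eff-equitable}.

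The only real work is finding an example whose rejection chain closes into a cycle; once the example is fixed, each step is a routine check. If one prefers not to trace the algorithm, an alternative is to verify directly, by enumeration as in the proof of \autoref{prop:impossible}, that $(\ap{1}{b},\ap{2}{a},\ap{3}{c})$ is the unique Rawlsian equitable allocation. Indeed, $P$ here coincides with the profile $P^{1}$ used there, so that enumeration can simply be cited.
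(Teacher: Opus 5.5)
Your proof is correct and is essentially the paper's: the paper uses the same three-agent example with the labels of agents $2$ and $3$ swapped ($P_1,P_2: a\succ b\succ c$, $P_3: b\succ a\succ c$, the same tie-breaking profile), runs deferred acceptance on $\pi^{P}$, and exhibits the same kind of Pareto-improving swap. Your observation that your profile is exactly $P^{1}$ from the proof of \autoref{prop:impossible} is also valid, so citing that enumeration is a legitimate shortcut to tracing the algorithm.
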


\begin{proof}
Let $N=\{1,2,3\}$ and $O=\{a,b,c\}$, with preferences
\[ P_1, P_2:\ a\succ b\succ c, \qquad P_3:\ b\succ a\succ c, \]
and let $\pi$ be the tie-breaking profile with $\pi_o:1\succ2\succ3$ for each $o\in O$. The induced priorities are
\[ \pi^{P}_a:3\succ1\succ2,\qquad \pi^{P}_b:1\succ2\succ3,\qquad \pi^{P}_c:1\succ2\succ3.\]

Agent-proposing deferred acceptance yields $\da(P)=(\ap{1}{b}, \ap{2}{c}, \ap{3}{a})$. By \autoref{thm:agent-optimal}, this is the agent-optimal Rawlsian equitable allocation. It is not Pareto efficient: the allocation $(\ap{1}{a}, \ap{2}{c}, \ap{3}{b})$ Pareto dominates it, strictly benefiting agents $1$ and $3$ while leaving agent $2$'s assignment unchanged.
\end{proof}

The conflict is not with $\da$ but with Rawlsian equity itself.

\begin{corollary}[Incompatibility with efficiency]\label{cor:no-eff-equitable}
There exist an assignment problem, a preference profile, and a tie-breaking profile at which no allocation is both Rawlsian equitable and Pareto efficient.
\end{corollary}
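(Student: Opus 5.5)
We reuse the instance from the proof of \autoref{prop:da-notefficient}: $N=\{1,2,3\}$, $O=\{a,b,c\}$, $P_1,P_2: a\succ b\succ c$, $P_3: b\succ a\succ c$, and $\pi_o:1\succ2\succ3$ for every $o$. The argument uses the lattice structure. By \autoref{thm:agent-optimal}, $\da(P)=(\ap{1}{b},\ap{2}{c},\ap{3}{a})$ is agent-optimal among Rawlsian equitable allocations. Every Rawlsian equitable $\mu$ therefore satisfies $\da(P)\succeq_N\mu$. Since $\da(P)$ is Pareto dominated by $\nu=(\ap{1}{a},\ap{2}{c},\ap{3}{b})$, we get $\nu(i)\mathrel{R_i}\da_i(P)\mathrel{R_i}\mu(i)$ for every $i$, with strict preference for agents $1$ and $3$ at the first step. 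Hence $\nu$ Pareto dominates every Rawlsian equitable $\mu$, and no Rawlsian equitable allocation is Pareto efficient.

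This argument needs only that \emph{some} Pareto dominating allocation of the agent-optimal one exists. That was already shown in \autoref{prop:da-notefficient}, together with transitivity of the weak dominance relation $\succeq_N$. So the corollary follows immediately from \autoref{prop:da-notefficient} and \autoref{thm:agent-optimal}, and I do not expect any real obstacle. The only point to state carefully is that weak dominance by $\da(P)$ followed by Pareto dominance by $\nu$ yields Pareto dominance of $\mu$ by $\nu$. This holds because the strict improvement for agent $1$ (or agent $3$) is preserved under composition with a weak improvement.

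As an independent sanity check, one could also enumerate all Rawlsian equitable allocations directly. Agent $1$ precedes agent $2$ under $\pi_a$ and they have equal claims to $a$, so agent $2$ can never hold $a$ while agent $1$ does not, and so on. This should confirm that $\da(P)$ is in fact the unique Rawlsian equitable allocation. That check is not needed, however, since the lattice argument already covers all Rawlsian equitable allocations.
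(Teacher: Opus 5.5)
Your argument is correct and is essentially the paper's: both combine the agent-optimality of $\da(P)$ from \autoref{thm:agent-optimal} with its Pareto inefficiency from \autoref{prop:da-notefficient}. The paper argues by contradiction (a Pareto efficient Rawlsian equitable $\mu$ with $\da(P)\succeq_N\mu$ must equal $\da(P)$). You instead use transitivity to show directly that $\nu$ Pareto dominates every Rawlsian equitable allocation, which is an equivalent route.
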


\begin{proof}
Take the agents, objects, preference profile $P$, and tie-breaking profile constructed in the proof of \autoref{prop:da-notefficient}, and suppose some allocation $\mu$ is both Rawlsian equitable and Pareto efficient. By \autoref{thm:agent-optimal}, $\da(P)\succeq_N\mu$. If some agent strictly preferred $\da(P)$ to $\mu$, then $\da(P)$ would Pareto dominate $\mu$, contradicting the Pareto efficiency of $\mu$; hence $\da(P)=\mu$. But then $\da(P)$ is Pareto efficient, contradicting \autoref{prop:da-notefficient}.
\end{proof}

The reduction that provides existence and agent-optimality also resolves the conflict: the \emph{efficiency-adjusted deferred acceptance} rule of \citet{kesten:10a}, run on the constructed priorities, selects an efficient allocation that satisfies \emph{Rawlsian neutrality}, a weaker equity requirement that we define next.

\begin{definition}[Rawlsian neutrality and efficiency] At a profile $P\in\mathcal{P}^n$, an allocation $\mu\in\mathcal{M}$ is \textbf{Rawlsian neutral} if the following holds for every allocation $\nu\in\mathcal{M}$: if $\nu(i)\mathrel{P_i}\mu(i)$ for some agent $i\in N$ who Rawlsian envies another agent under $\mu$, then $\mu(j)\mathrel{P_j}\nu(j)$ for some agent $j\in N$ who Rawlsian envies another agent under $\nu$. An allocation is \textbf{Rawlsian efficient} if it is Rawlsian neutral and Pareto efficient. \end{definition}

Every allocation that is Rawlsian equitable is Rawlsian neutral since it has no Rawlsian envy pair at all.

We now define the rule that will select the unique Rawlsian efficient allocation. Call an object \textbf{underdemanded} at an allocation if no agent assigned under that allocation prefers it to her own assignment. At a deferred-acceptance outcome an underdemanded object always exists: any object that receives a proposal in the terminal round is one. Such an object was empty until that round, because a proposal to an occupied object forces a rejection and an occupied object never becomes empty. It therefore rejects no agent during the run, and since an agent prefers an object to her assignment only if that object rejected her, no agent prefers this object to her own.

The \textbf{efficiency-adjusted deferred acceptance} rule $\eada$ computes its outcome $\eada(P)$ in rounds. \begin{itemize} \item \emph{Round $1$.} Run agent-proposing deferred acceptance with priorities $\pi^{P}$ on all of $N$ and $O$, producing $\da(P)$. Permanently assign each underdemanded object to the agent holding it, and remove those agents and objects from the market. \item \emph{Round $k>1$.} Run agent-proposing deferred acceptance on the market left by round $k-1$, with $\pi^{P}$ restricted to the agents and objects that remain. Permanently assign each object that is underdemanded at this outcome to its holder, and remove those agents and objects. \end{itemize} Since an underdemanded object exists in every round, at least one agent is settled per round, so the procedure ends after at most $n$ rounds; $\eada(P)$ is the allocation formed by the permanent assignments made across all rounds. In every round, the priorities are $\pi^{P}$ restricted to the remaining agents and objects, not priorities recomputed from ranks within the reduced market. The rule is the full-consent case of \citet{kesten:10a}, in the streamlined form of \citet{tang/yu:14}.

\begin{theorem}[Uniqueness, EADA]\label{thm:eada} For each $P\in\mathcal{P}^n$, there is a unique Rawlsian efficient allocation at $P$; it is $\eada(P)$, and either $\eada(P)=\da(P)$ or $\eada(P)$ Pareto dominates every allocation that is Rawlsian equitable. \end{theorem}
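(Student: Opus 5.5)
The plan is to reduce everything to the auxiliary market with the fixed strict priority profile $\pi^{P}$. The proof of \autoref{thm:agent-optimal} shows that, for a fixed $P$, agent $i$ Rawlsian envies agent $j$ under $\mu$ if and only if $(i,\mu(j))$ blocks $\mu$ with respect to $\pi^{P}$. So ``agent $i$ Rawlsian envies another agent under $\mu$'' means exactly that $i$ is part of a blocking pair of $\mu$ in the school-choice market $(N,O,P,\pi^{P})$.

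Under this translation, Rawlsian neutrality is the ``constrained efficiency'' notion of \citet{reny:22}, also called ``legality'' in \citet{ehlers/morrill:20}. Rawlsian efficiency is then Reny's priority-efficiency at full consent. The priorities $\pi^{P}$ are fixed once $P$ is fixed, and $\eada$ as defined is Kesten's full-consent EADA in the form of \citet{tang/yu:14}, run on these priorities. Hence the existence and uniqueness claims follow from Reny's theorem: in any school-choice market with strict priorities and strict preferences, there is a unique priority-efficient matching, and it coincides with the full-consent EADA outcome. I would state the translation carefully, noting that only the definitions at a single profile are involved, so the endogeneity of the priorities plays no role. I would then invoke \citet{reny:22}, or \citet{tang/yu:14} for the identification with the iterated-removal procedure.

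If a self-contained argument is preferred, I would prove two things. First, $\eada(P)$ is Pareto efficient. This uses the standard Tang--Yu argument: agents removed at round $k$ receive their assignment in every later DA run restricted to the remaining market, and any Pareto improvement would have to involve an agent holding an underdemanded object, which nobody else wants. Second, $\eada(P)$ is Rawlsian neutral and unique. Uniqueness is the step I expect to be the main obstacle, since it requires Reny's fixed-point and legality machinery. There, the set of Rawlsian neutral allocations is characterized as the fixed points of an operator on sets of matchings, and the efficient element is shown to be unique. I would cite rather than reprove it.

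Finally, I would prove the dichotomy. Each round of $\eada$ weakly improves every agent relative to $\da(P)$: this is the Tang--Yu and Kesten monotonicity, which gives $\eada(P)\succeq_N\da(P)$. Suppose $\eada(P)\neq\da(P)$. Then by strictness some agent is strictly better off, so $\eada(P)$ Pareto dominates $\da(P)$. Now take any Rawlsian equitable allocation $\mu$. By \autoref{thm:agent-optimal}, $\da(P)\succeq_N\mu$, so $\eada(P)\succeq_N\mu$, and the agent strictly better off than under $\da(P)$ is also strictly better off than under $\mu$. Therefore $\eada(P)$ Pareto dominates every Rawlsian equitable allocation.
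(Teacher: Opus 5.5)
Your proposal is correct and follows essentially the paper's route: translate Rawlsian envy into priority violations under $\pi^{P}$, invoke \citet{reny:22} for existence, uniqueness, and the identification with full-consent $\eada$, and then derive the dichotomy. The dichotomy step differs slightly: you chain $\eada(P)\succeq_N\da(P)\succeq_N\mu$ using Kesten's weak improvement over $\da$, whereas the paper uses Reny's result that every agent weakly prefers $\eada(P)$ to every Rawlsian neutral (hence every Rawlsian equitable) allocation, and then splits on whether $\eada(P)$ is itself Rawlsian equitable.

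Two terminology points need correcting. Reny's notion is called priority-neutrality; ``constrained efficiency'' usually means something else in this literature. The legality of \citet{ehlers/morrill:20} is defined through a von Neumann--Morgenstern stable set and is not another name for neutrality. The uniqueness claim therefore has to come from Reny, which is where your argument in fact takes it.
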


To illustrate, consider the profile constructed in the proof of \autoref{prop:da-notefficient}, where $\da(P)=(\ap{1}{b},\ap{2}{c},\ap{3}{a})$. Object $c$ is underdemanded, since no agent prefers it to her assignment. Fixing agent $2$ at $c$ and rerunning deferred acceptance on the remaining agents and objects yields $(\ap{1}{a},\ap{2}{c},\ap{3}{b})$, so $\eada(P)=(\ap{1}{a},\ap{2}{c},\ap{3}{b})$, which is exactly the Pareto improvement used in the proposition.

Rawlsian efficiency admits an equivalent description that dispenses with the neutrality clause: an allocation is Rawlsian efficient if and only if no allocation makes any agent better off unless it makes worse off some agent who Rawlsian envies another under the new allocation \citep{reny:22}. In the terminology of \citet{ehlers/morrill:20}, the legal set for this market is its von Neumann--Morgenstern stable set, whose agent-optimal element coincides with the unique Rawlsian efficient allocation; hence $\eada(P)$ is also the agent-optimal legal allocation.

The model assumes that objects have unit capacity and that agents and objects are equally numerous, but both assumptions can be relaxed as long as every agent is assigned an object. The construction of $\pi^{P}$ uses only ranks and the tie-breaking orders, so it applies unchanged when objects have capacities and aggregate capacity suffices to assign every agent; the auxiliary market is then one of college admissions with strict priorities, where deferred acceptance, the lattice structure of the stable set, and the characterization of the efficiency-adjusted rule all remain available. The impossibilities extend as well, since a market with capacities contains our examples as the special case in which every capacity is one. If agents may instead remain unassigned, outside options become necessary; that is a genuine extension rather than a relabeling, since ranks would then have to be taken among acceptable objects, and we do not pursue it here.

\subsection{Characterization}
\label{sec:characterization}

We now characterize Rawlsian equity as a fairness notion rather than as a selection rule. The primitive is the adjudication of an objection raised by one agent against another over a contested object. Throughout, we restrict attention to notions that adjudicate objections \emph{pairwise} and on the basis of \emph{ranks} alone: whether an objection is upheld may depend on the two agents, the contested object, and the positions the two agents assign to it, but on nothing else. These two restrictions are maintained assumptions, not axioms; they fix the scope of the analysis. Criteria that compare complete allocations, such as the leximin criteria discussed above, do not violate any condition below; they lie outside the class we consider. Within this class, four conditions characterize Rawlsian equity.

For a tie-breaking profile $\pi\in\Pi^n$, let $\re_\pi(P)$ denote the set of allocations that are Rawlsian equitable at $\pi$ when the profile is $P$.

\begin{definition}[Rank-local objection system]
\label{def:rank-local-objections}
A \textbf{rank-local objection system} is a collection
\[
\kappa=\left(\kappa_o^{ij}\right)_{\substack{o\in O\\ i,j\in N,\ i\neq j}},
\qquad
\kappa_o^{ij}:\{1,\ldots,n\}^2\longrightarrow\{0,1\}.
\]
We read $\kappa_o^{ij}(k,\ell)=1$ as: agent $i$'s objection against agent $j$ over object $o$ is upheld when $i$ ranks $o$ in position $k$ and $j$ ranks it in position $\ell$. The fairness correspondence generated by $\kappa$ is
\[
F^\kappa(P)
=
\left\{
\mu\in\mathcal{M}:
\begin{array}{l}
\text{there are no distinct agents $i,j\in N$ such that}\\
\mu(j)\mathrel{P_i}\mu(i)
\text{ and }
\kappa_{\mu(j)}^{ij}\bigl(\rank(P_i,\mu(j)),\rank(P_j,\mu(j))\bigr)=1
\end{array}
\right\}.
\]
\end{definition}

An allocation is unfair, then, exactly when some envious agent has an upheld objection against the holder of the envied object. In particular, every envy-free allocation belongs to $F^\kappa(P)$, whatever $\kappa$ may be.

\begin{definition}[Conditions on objection systems]
\label{def:objection-axioms}
A rank-local objection system $\kappa$ is:
\begin{enumerate}
\item \textbf{decisive} if $\kappa_o^{ij}(k,\ell)+\kappa_o^{ji}(\ell,k)=1$ for every object $o$, distinct agents $i,j$, and ranks $k,\ell$;
\item \textbf{compensatory} if $\kappa_o^{ij}(k,\ell)=1$ for every object $o$, distinct agents $i,j$, and ranks $k>\ell$;
\item \textbf{tie-invariant} if $\kappa_o^{ij}(k,k)=\kappa_o^{ij}(\ell,\ell)$ for every object $o$, distinct agents $i,j$, and ranks $k,\ell$;
\item \textbf{attainable} if $F^\kappa(P)\neq\emptyset$ for every $P\in\mathcal{P}^n$.
\end{enumerate}
\end{definition}

Compensation is the normative content: between two agents who both want an object, the one for whom it is the less favorable fallback has the stronger claim. Reversing the inequality yields the fitness principle instead, and we return to that alternative in the Appendix. Tie-invariance requires that the resolution of a tie not depend on the common position at which the tie occurs; two agents who both rank an object second are adjudicated as two agents who both rank it fifth would be. Attainability requires that some allocation be deemed fair at every preference profile; a criterion that can condemn every allocation offers no guidance.

Decisiveness is the most restrictive of the four conditions. It requires every pairwise comparison of claims to have exactly one winner, and so rules out leaving equal claims unadjudicated. The requirement is substantive: the system that upholds no objection under equal claims is attainable and generates the criterion that \autoref{prop:tiebreak} characterizes as Rawlsian equity at some tie-breaking profile. Decisiveness reflects the view that a fairness notion which recognizes a conflict should settle it, and that declining to adjudicate equal claims leaves the notion silent exactly where two agents are situated identically.

\begin{theorem}[Characterization of Rawlsian equity]
\label{thm:rawlsian-characterization}
A rank-local objection system $\kappa$ is decisive, compensatory, tie-invariant, and attainable if and only if there is a tie-breaking profile $\pi\in\Pi^n$ such that, for every object $o\in O$, distinct agents $i,j\in N$, and ranks $k,\ell\in\{1,\ldots,n\}$,
\[
\kappa_o^{ij}(k,\ell)=1
\quad\Longleftrightarrow\quad
k>\ell
\quad\text{or}\quad
\bigl[k=\ell\text{ and }\pi_o(i)<\pi_o(j)\bigr].
\]
In that case $F^\kappa(P)=\re_\pi(P)$ for every $P\in\mathcal{P}^n$.
\end{theorem}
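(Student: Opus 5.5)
The plan is to prove the two directions separately. For sufficiency, fix $\pi$ and define $\kappa$ by the displayed rule. We then check the four conditions and the identity $F^\kappa=\re_\pi$. For necessity, decisiveness and compensation pin down $\kappa$ off the diagonal, tie-invariance reduces the diagonal to one binary relation per object, and attainability forces each such relation to be a linear order, which we take as $\pi_o$.

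\emph{Sufficiency.} Let $\kappa$ be given by the display.
\begin{itemize}
\item \emph{Decisiveness.} If $k\neq \ell$, exactly one of $k>\ell$ and $\ell>k$ holds. If $k=\ell$, exactly one of $\pi_o(i)<\pi_o(j)$ and $\pi_o(j)<\pi_o(i)$ holds.
\item \emph{Compensation and tie-invariance.} Both are immediate, since the diagonal value $\kappa^{ij}_o(k,k)$ depends only on $\pi_o$.
\item \emph{Identity.} Comparing the definition of $F^\kappa(P)$ with the definition of Rawlsian envy, an upheld objection by an envious agent is literally Rawlsian envy at $\pi$. Hence $F^\kappa(P)=\re_\pi(P)$.
\item \emph{Attainability.} This then follows from \autoref{thm:existence}.
\end{itemize}
The same identity, once $\kappa$ is shown to have the displayed form, gives the final sentence of the theorem.

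\emph{Necessity, off-diagonal entries.} For $k>\ell$, compensation gives $\kappa^{ij}_o(k,\ell)=1$. Decisiveness then gives $\kappa^{ji}_o(\ell,k)=0$. So every entry with unequal ranks agrees with the display. By tie-invariance, $\kappa^{ij}_o(k,k)$ does not depend on $k$; write $i\triangleright_o j$ when it equals $1$. Decisiveness at $k=\ell$ says that, for distinct $i,j$, exactly one of $i\triangleright_o j$ and $j\triangleright_o i$ holds. Thus $\triangleright_o$ is a tournament on $N$. A tournament is a strict linear order if and only if it has no directed $3$-cycle. Once that is shown, define $\pi_o$ by $\pi_o(i)<\pi_o(j)$ iff $i\triangleright_o j$, and $\kappa$ has the displayed form.

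\emph{Main obstacle: excluding $3$-cycles via attainability.} Suppose $i\triangleright_o j\triangleright_o l\triangleright_o i$ for some object $o$. The goal is a profile $P$ with $F^\kappa(P)=\emptyset$. The construction I would try first makes $o$ a second choice of $i,j,l$ and a first choice of everyone else.
\begin{itemize}
\item Let $i,j,l$ rank some other object $x$ first and $o$ second.
\item Let every other agent rank $o$ first and $x$ last.
\end{itemize}
By compensation, the profile has three consequences.
\begin{itemize}
\item If $o$ goes to an outsider, at least two of $i,j,l$ miss $x$. Each of them ranks $o$ second against the outsider's first, so their objection is upheld.
\item If $x$ goes to an outsider, the outsider ranks $x$ last, so any of $i,j,l$ not holding $x$ has an upheld objection against it.
\item Hence $o$ and $x$ both go to members of $\{i,j,l\}$. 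The third member ranks both objects ahead of her assignment and has a tie with the $o$-holder at rank $2$ and with the $x$-holder at rank $1$.
\end{itemize}
The $o$-tie goes against the holder in half the cases, because of the cycle. The other half must be covered by the tie at $x$, and here lies the difficulty: $\triangleright_x$ need not be the same cycle. I would handle this by a case split on $\triangleright_x$ restricted to $\{i,j,l\}$.
\begin{itemize}
\item If $\triangleright_x$ contains the reverse cycle, relabel which agents share the top object.
\item If $\triangleright_x$ is transitive, replace the common top object $x$ by distinct top objects for $i,j,l$ that are ranked last by everyone else. Then only ties at $o$ matter, and the cyclic orientation of $\triangleright_o$ ensures that every way of giving $o$ to one of them leaves an upheld objection.
\end{itemize}
Verifying that in each case all allocations violate fairness is the step I expect to need the most care. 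The remaining steps are routine.
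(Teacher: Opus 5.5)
Your sufficiency direction and the reduction of necessity to excluding directed $3$-cycles in each $\triangleright_o$ coincide with the paper. The gap is in the attainability step.

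First, your second bullet reads compensation backwards. An outsider $h$ who ranks $x$ last has the \emph{strongest} claim to it: $\kappa_x^{hi}(n,1)=1$, hence $\kappa_x^{ih}(1,n)=0$ by decisiveness. So no member of $\{i,j,l\}$ can object to $h$ holding $x$. That case can still be closed through the ties at $o$. The real damage is elsewhere. Outsiders who rank $x$ last never envy its holder. So when $o$ goes to some $a\in\{i,j,l\}$ and $x$ goes to the $\triangleright_o$-predecessor of $a$, the only candidate objections are rank-$1$ ties at $x$, and nothing is known about $\triangleright_x$.

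Checking the three such assignments, your profile yields a contradiction exactly when $\triangleright_x$ restricted to $\{i,j,l\}$ is a $3$-cycle, in either orientation. No relabeling is needed there, because the $o$-holder herself also envies the $x$-holder at a rank-$1$ tie. It fails when $\triangleright_x$ is transitive on the triple. Give $x$ to the member $m$ who beats both others under $\triangleright_x$, and give $o$ to the member whom $m$ beats under $\triangleright_o$. Then no objection over $o$ or $x$ is upheld. If every other $\triangleright_y$ is a linear order, the rest can be completed to a fair allocation.

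Your remedy for the transitive case does not work either. With distinct tops $x_i,x_j,x_l$, nobody is forced to want $o$. For $n=4$, giving each of $i,j,l$ her own top and the outsider $o$ is envy-free, so $F^\kappa(P)\neq\emptyset$ whatever $\kappa$ is.

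The missing idea is to have the outsiders rank the common top object \emph{second}, right after $o$, so that compensation works for you at that object. Suppose both $o$ and $x$ go to members of the triple, and the $\triangleright_o$-predecessor of the $o$-holder holds $x$. Then any outsider (one exists when $n>3$) holds neither object and prefers $x$ to her assignment. She ranks $x$ second against the holder's first, so her objection is upheld with no reference to $\triangleright_x$. The other cases are handled as follows.
\begin{itemize}
\item If the $\triangleright_o$-predecessor of an inside $o$-holder does not hold $x$, she objects at the rank-$2$ tie at $o$.
\item If an outsider holds $o$, some member of the triple holds neither object and objects by compensation.
\end{itemize}
This is the paper's construction.

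For $n=3$ there are no outsiders, so the paper separately uses the profile in which all three agents rank $o$ first. The holder of $o$ is then always beaten at a rank-$1$ tie by her cycle predecessor. Your plan does not treat $n=3$ separately, and there your profile again depends on $\triangleright_x$.
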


The proof, in the Appendix, relies on a lemma showing that attainability forces the resolution of equal claims to be transitive: a cyclic tie-break leaves some preference profiles with no fair allocation.

The conditions are imposed on the objection system rather than on the correspondence it generates, and the two are not equivalent in general. Some rank pairs are never triggered: an agent who ranks $o$ last cannot prefer $o$ to her own object, so $\kappa_o^{ij}(n,n)$ has no bearing on $F^\kappa$. What \autoref{thm:rawlsian-characterization} pins down uniquely is the adjudication rule; the correspondence it generates is determined by every configuration that can arise.

The characterization and \autoref{prop:tiebreak} answer different questions. \autoref{prop:tiebreak} fixes a preference profile and an allocation and asks whether some tie-breaking profile renders that allocation Rawlsian equitable. \autoref{thm:rawlsian-characterization} instead imposes conditions on a fairness notion across all preference profiles and recovers a single tie-breaking profile that represents the notion throughout.

\section{Conclusion}

We have introduced Rawlsian equity, a fairness notion for the assignment problem that grades competing claims to an object by the positions the claimants assign to it. Rawlsian equitable allocations are exactly the stable allocations of an auxiliary market whose priorities are built from the reported preferences. They therefore always exist, form a lattice, and contain an agent-optimal element, which the agent-proposing deferred acceptance algorithm computes. Equity carries two costs: no Rawlsian equitable rule is strategy-proof, and no allocation is, in general, both Rawlsian equitable and Pareto efficient. Weakening equity to Rawlsian neutrality resolves the second cost, since a unique Rawlsian efficient allocation exists and the efficiency-adjusted deferred acceptance rule selects it. Finally, four conditions on the adjudication of pairwise objections characterize Rawlsian equity, and the characterization recovers the tie-breaking profile from the requirement that fair allocations exist.

Rawlsian equity and strategy-proofness are incompatible because reports shape the priorities agents face. This leaves open how manipulability is structured in our setting, and whether some Rawlsian equitable rule is less exposed to it than others. One natural benchmark is the weaker requirement of not being obviously manipulable \citep{troyan/morrill:20}; whether any Rawlsian equitable rule meets it we leave for future work. The question is not idle: \citet{troyan/delacretaz/kloosterman:20} show that no essentially stable mechanism is obviously manipulable. Since their relaxation, unlike stability itself, is compatible with Pareto efficiency, weakening a fairness requirement can restore efficiency without exposing the rule to obvious manipulation.

\bibliography{ref}

\clearpage

\appendix

\section*{Appendix: Omitted proofs}

\subsection*{Proof of \autoref{prop:tiebreak}}
\begin{proof}
We first prove the equivalence for every tie-breaking profile. Let $\mu$ be absolutely Rawlsian equitable, and consider agents $i$ and $j$ with $\mu(j)\mathrel{P_i}\mu(i)$. Since $\mu$ admits no weak Rawlsian envy, agent $i$'s claim to $\mu(j)$ is neither stronger than nor equal to agent $j$'s. Rawlsian envy requires one of these two, so $i$ does not Rawlsian envy $j$ at any tie-breaking profile. Hence, $\mu$ is Rawlsian equitable at every tie-breaking profile.

Conversely, let $\mu$ be Rawlsian equitable at every tie-breaking profile, and suppose towards a contradiction that agent $i$ weakly Rawlsian envies agent $j$, so that $\mu(j)\mathrel{P_i}\mu(i)$ and agent $i$'s claim to $\mu(j)$ is stronger than or equal to agent $j$'s. If the claim is strictly stronger, then $i$ Rawlsian envies $j$ at every tie-breaking profile, since the tie-breaking profile enters the definition only under equal claims. If the claims are equal, choose a tie-breaking profile whose tie-breaking order for $\mu(j)$ ranks $i$ ahead of $j$; then $i$ Rawlsian envies $j$ at that profile. Either case contradicts the assumption. Thus, under $\mu$, no agent weakly Rawlsian envies another, so $\mu$ is absolutely Rawlsian equitable.

We turn to the equivalence for some tie-breaking profile. Suppose first that some agent $i$ prefers $\mu(j)$ to $\mu(i)$ while having a stronger claim to $\mu(j)$ than agent $j$. As noted above, $i$ then Rawlsian envies $j$ at every tie-breaking profile, so $\mu$ is Rawlsian equitable at none.

Conversely, suppose no agent prefers another agent's object to her own while having a stronger claim to it. Let $\pi$ be the tie-breaking profile in which, for each object $o$, the tie-breaking order $\pi_o$ ranks the agent $\mu^{-1}(o)$ first and the remaining agents in any fixed order. Consider agents $i$ and $j$ with $\mu(j)\mathrel{P_i}\mu(i)$. By hypothesis, agent $i$'s claim to $\mu(j)$ is not stronger than agent $j$'s. If the two claims are equal, then, because $\pi_{\mu(j)}$ ranks $j=\mu^{-1}(\mu(j))$ first, $\pi_{\mu(j)}(i)>\pi_{\mu(j)}(j)$. In neither case does $i$ Rawlsian envy $j$ at $\pi$. Hence, $\mu$ is Rawlsian equitable at $\pi$.
\end{proof}

\subsection*{Proof of \autoref{thm:eada}}
\begin{proof}
By the equivalence in the proof of \autoref{thm:agent-optimal}, the allocations that are Rawlsian equitable at $P$ are exactly those stable with respect to $\pi^{P}$, and an agent Rawlsian envies another under an allocation if and only if that agent's priority is violated in the market with priorities $\pi^{P}$. Rawlsian neutrality and Rawlsian efficiency are therefore, respectively, the priority-neutrality and priority-efficiency of \citet{reny:22} for that market.

Since $\pi^{P}\in\Pi^n$ and each object has a single unit, \citet{reny:22} applies: a unique Rawlsian efficient allocation exists, it is the full-consent efficiency-adjusted deferred acceptance outcome $\eada(P)$ of \citet{kesten:10a}, and every agent weakly prefers it to every Rawlsian neutral allocation. Every allocation that is Rawlsian equitable is stable with respect to $\pi^{P}$, hence Rawlsian neutral, so every agent weakly prefers $\eada(P)$ to it.

If $\eada(P)$ equals some Rawlsian equitable allocation, then, being weakly preferred by every agent to the agent-optimal allocation $\da(P)$, it equals $\da(P)$. Otherwise, $\eada(P)$ differs from every Rawlsian equitable allocation while every agent weakly prefers it to each, so $\eada(P)$ Pareto dominates every allocation that is Rawlsian equitable.
\end{proof}

\subsection*{Proof of \autoref{thm:rawlsian-characterization}}
The proof relies on the following lemma, which is the heart of the characterization. It says that the acyclicity of the tie-breaking is not an assumption we impose but a consequence of requiring that fair allocations exist.

\begin{lemma}[Attainability forces acyclicity]
\label{lem:acyclicity}
Let $\kappa$ be decisive, compensatory, tie-invariant, and attainable. For each object $o\in O$, define a binary relation $T_o$ on $N$ by
\[
i\mathrel{T_o}j
\quad\Longleftrightarrow\quad
\kappa_o^{ij}(k,k)=1
\text{ for some }k\in\{1,\ldots,n\}.
\]
Then $T_o$ is a strict linear order on $N$.
\end{lemma}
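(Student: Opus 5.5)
By tie-invariance, $i\mathrel{T_o}j$ holds if and only if $\kappa_o^{ij}(k,k)=1$ for \emph{every} $k$. The value on the diagonal does not depend on $k$, so I can use any convenient rank $k$ in the constructions below.

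\emph{Asymmetry and completeness.} Decisiveness with $k=\ell$ gives $\kappa_o^{ij}(k,k)+\kappa_o^{ji}(k,k)=1$. So for distinct $i,j$, exactly one of $i\mathrel{T_o}j$ and $j\mathrel{T_o}i$ holds. Irreflexivity is automatic, since $T_o$ is defined only on distinct pairs. It therefore remains to show transitivity. Because $T_o$ is a tournament, transitivity is equivalent to having no $3$-cycle.

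\emph{No $3$-cycles.} Suppose $i\mathrel{T_o}j$, $j\mathrel{T_o}h$, and $h\mathrel{T_o}i$ for distinct $i,j,h$, with $o=a$. I will build a profile $P$ at which $F^\kappa(P)=\emptyset$, contradicting attainability. Pick two further objects $b,c$; this needs $n\ge 3$, which holds automatically once three distinct agents exist.

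\begin{itemize}
\item Agents $i,j,h$ rank $a$ first and $b$, $c$ second and third (in an order to be chosen).
\item Every other agent $m$ ranks $a,b,c$ in her last three positions.
\item The remaining $n-3$ objects are arranged so that the other agents each get a distinct top choice and never envy anyone. I intend to give these $n-3$ objects to agents outside $\{i,j,h\}$ in the same order, so that each such agent receives her first choice.
\end{itemize}

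Now take any allocation $\mu$.

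\begin{itemize}
\item If some agent $m\notin\{i,j,h\}$ holds an object $x\in\{a,b,c\}$, then some agent $g\in\{i,j,h\}$ receives an object outside $\{a,b,c\}$. That object is ranked $\ge4$ by $g$, and $g$ ranks $x$ in position $\le3$. So $g$ prefers $x$ and has a strictly stronger claim to it, since $m$ ranks $x$ in position $\ge n-2$. Compensatoriness upholds the objection unless the ranks are not strictly ordered. To avoid this borderline, I make $m$ rank $x$ strictly below position $3$, which is possible when $n\ge 4$ and the small cases are checked directly. The exact placement of ranks needs some care here.
\item Otherwise $\{i,j,h\}$ holds exactly $\{a,b,c\}$. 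Whoever holds $a$, say $i$, faces objections from the others, who both prefer $a$ and rank it $1$, tied with $i$. The objection by $g$ against $i$ is upheld exactly when $g\mathrel{T_a}i$. Under the cycle, $h\mathrel{T_a}i$, so $h$ must hold $a$ instead, and then $j$ objects. So every assignment of $a$ within the triple is blocked by the cyclic tie-break.
\end{itemize}

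Since every $\mu$ is blocked, $F^\kappa(P)=\emptyset$.

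\emph{Main obstacle.} The hardest part is the second step: arranging the other agents' rankings so that no other object can absorb the conflict. Concretely, I must make sure that any allocation giving $a$ to an outsider, or moving members of the triple elsewhere, is blocked by a \emph{strict} compensatory objection. This requires tracking exact positions so that the claims are strictly ordered, together with separate checks for small $n$.
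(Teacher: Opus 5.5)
Your tournament argument is correct, and so is your treatment of allocations in which $\{i,j,h\}$ holds exactly $\{a,b,c\}$; for $n=3$ your profile is the paper's.

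\textbf{The gap.} It lies in your first case, and it is not a matter of borderline rank placement: you have the direction of compensation reversed. A stronger claim means a \emph{larger} rank number. Compensation upholds $\kappa_x^{gm}(k,\ell)$ only when $k>\ell$, i.e.\ when the objector ranks the object \emph{worse} than the holder does. In your profile $g$ ranks $x\in\{a,b,c\}$ in position at most $3$, while the outsider $m$ ranks it among her last three positions. You even propose to push it strictly below position $3$. So $m$ holds the strictly stronger claim, decisiveness forces $\kappa_x^{gm}(k,\ell)=0$, and the objection is rejected.

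Since the triple ranks $a$ first, no outsider can ever have a strictly weaker claim to $a$ than a triple member. Nothing in your profile can dislodge an outsider who holds $a$, so the cycle in $T_a$ never bites. Here is a concrete failure. Take $n=4$, $P_i=P_j=P_h: a\succ b\succ c\succ d$, and $P_m: d\succ a\succ b\succ c$. Let $T_b$ and $T_c$ both order the triple as $i,j,h$; your argument assumes nothing about them. Then the allocation $(\ap{m}{a},\ap{i}{b},\ap{j}{c},\ap{h}{d})$ has no upheld objection:
\begin{itemize}
\item every objection over $a$ is raised against $m$, who has the strictly stronger claim;
\item $m$'s objection over $d$ is raised against $h$, who has the strictly stronger claim;
\item the remaining objections are ties at $b$ or $c$, lost under $T_b$ or $T_c$.
\end{itemize}
So $F^\kappa(P)\neq\emptyset$, and your profile yields no contradiction.

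\textbf{The missing idea.} Outsiders must rank the contested object strictly \emph{better} than the triple does, so the triple must not rank it first. The paper achieves this with a second object $q$. Agents $1,2,3$ rank $q$ first and $o$ second, and every other agent ranks $o$ first and $q$ second. Then every allocation is blocked:
\begin{itemize}
\item If an outsider holds $o$, some triple member holds neither $o$ nor $q$. She wins a compensatory objection over $o$ (rank $2$ against $1$).
\item If a triple member $j$ holds $o$, consider her $T_o$-predecessor $i$ in the cycle. If $i$ holds neither $o$ nor $q$, she wins the tie at rank $2$.
\item If instead $i$ holds $q$, then every outsider holds neither $o$ nor $q$. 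Any outsider wins a compensatory objection against $i$ over $q$ (rank $2$ against $1$).
\end{itemize}
Combined with the all-rank-$o$-first profile for $n=3$, this shows every allocation is blocked for every $n\ge 3$.
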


\begin{proof}
Tie-invariance implies that the defining condition is independent of the choice of $k$, so $T_o$ is well defined. Decisiveness implies that $T_o$ is a tournament: for distinct $i$ and $j$, exactly one of $i\mathrel{T_o}j$ and $j\mathrel{T_o}i$ holds.

We claim that $T_o$ is transitive. Suppose it is not. Every non-transitive tournament contains a directed three-cycle, so there are distinct agents, which we label $1$, $2$, and $3$, with
\[
1\mathrel{T_o}2,\qquad 2\mathrel{T_o}3,\qquad 3\mathrel{T_o}1.
\]

Suppose first that $n=3$. Consider a profile at which all three agents rank $o$ first. Under any allocation, the holder of $o$ is preceded under $T_o$ by one of the other two agents. That agent prefers $o$ to her assignment and has an equal claim to it, so her objection is upheld. Hence $F^\kappa(P)=\emptyset$, contradicting attainability.

Suppose now that $n>3$, and fix an object $q\neq o$. Consider the profile at which agents $1$, $2$, and $3$ rank $q$ first and $o$ second, every other agent ranks $o$ first and $q$ second, and the remaining objects are ordered arbitrarily below. Let $\mu$ be an allocation.

If some agent $h\notin\{1,2,3\}$ receives $o$, then, because at most one agent receives $q$, some $i\in\{1,2,3\}$ receives neither $q$ nor $o$. Agent $i$ prefers $o$ to her assignment, ranks $o$ second, and $h$ ranks $o$ first, so compensation upholds $i$'s objection against $h$.

Otherwise, some $j\in\{1,2,3\}$ receives $o$. Let $i\in\{1,2,3\}$ be the agent with $i\mathrel{T_o}j$. If $\mu(i)\neq q$, then agent $i$ prefers $o$ to her assignment and has an equal claim to it, since both $i$ and $j$ rank $o$ second, so her objection against $j$ is upheld. If instead $\mu(i)=q$, then both $o$ and $q$ are held by agents in $\{1,2,3\}$, so any agent $h\notin\{1,2,3\}$ receives neither. Agent $h$ prefers $q$ to her assignment, ranks $q$ second, and $i$ ranks $q$ first, so compensation upholds $h$'s objection against $i$.

Every allocation, therefore, admits an upheld objection, so $F^\kappa(P)=\emptyset$, again contradicting attainability. Hence, $T_o$ is transitive, and being complete, asymmetric, and transitive, it is a strict linear order.
\end{proof}

With \autoref{lem:acyclicity} in hand, we prove the theorem.

\begin{proof}
Suppose there is a tie-breaking profile $\pi$ satisfying the displayed equivalence. Compensation and tie-invariance are immediate. For decisiveness, if $k>\ell$ then $\kappa_o^{ij}(k,\ell)=1$ and $\kappa_o^{ji}(\ell,k)=0$, while if $k=\ell$ then exactly one of $\pi_o(i)<\pi_o(j)$ and $\pi_o(j)<\pi_o(i)$ holds. The condition defining $F^\kappa$ now coincides with the definition of Rawlsian envy at $\pi$, so $F^\kappa(P)=\re_\pi(P)$ for every $P$, and attainability follows from \autoref{thm:existence}.

Conversely, let $\kappa$ be decisive, compensatory, tie-invariant, and attainable. Compensation gives $\kappa_o^{ij}(k,\ell)=1$ whenever $k>\ell$. If $k<\ell$, then compensation applied to the reverse comparison gives $\kappa_o^{ji}(\ell,k)=1$, so decisiveness gives $\kappa_o^{ij}(k,\ell)=0$. For equal claims, \autoref{lem:acyclicity} shows that $T_o$ is a strict linear order; let $\pi_o$ be the tie-breaking order with $\pi_o(i)<\pi_o(j)$ if and only if $i\mathrel{T_o}j$, and let $\pi=(\pi_o)_{o\in O}$. By tie-invariance, $\kappa_o^{ij}(k,k)=1$ if and only if $\pi_o(i)<\pi_o(j)$, which is the displayed equivalence.
\end{proof}

\subsection*{Independence of the conditions}
Each condition is indispensable.

\begin{enumerate}
\item \emph{Decisiveness.} Let $\kappa_o^{ij}(k,\ell)=1$ if and only if $k>\ell$, so that equal claims never sustain an objection. This system is compensatory and tie-invariant, and by \autoref{prop:tiebreak} it generates
\[
F^\kappa(P)=\bigcup_{\pi\in\Pi^n}\re_\pi(P),
\]
which contains $\re_\pi(P)$ for any fixed $\pi$ and is therefore nonempty by \autoref{thm:existence}. It is not decisive, and it is not $\re_\pi$ for any $\pi$: at a profile where all agents share one preference ordering, every allocation belongs to $F^\kappa(P)$, whereas $\re_\pi(P)$ contains only allocations that give the commonly top-ranked object to the agent whom $\pi$ ranks first for it.

\item \emph{Compensation.} Fix $\pi\in\Pi^n$ and let $\kappa_o^{ij}(k,\ell)=1$ if and only if $k<\ell$, or $k=\ell$ and $\pi_o(i)<\pi_o(j)$. This system implements the fitness principle, awarding the object to the agent who ranks it higher. It is decisive and tie-invariant, and attainable by the argument of \autoref{thm:existence} applied to the priorities that order agents by increasing rank of the object, with ties broken by $\pi_o$.

\item \emph{Tie-invariance.} For each object $o$ and rank $k$, fix a strict linear order $\pi_{o,k}$ on $N$, and resolve a tie at rank $k$ by $\pi_{o,k}$. This system is decisive and compensatory, and attainable by the same argument, since each object still induces a strict priority order at every profile. Whenever $\pi_{o,k}$ varies with $k$, it is not tie-invariant and hence not of the form in \autoref{thm:rawlsian-characterization}.

\item \emph{Attainability.} Suppose $n\geq 3$ and resolve equal claims by a tournament containing a directed three-cycle, the same tournament at every rank. This system is decisive, compensatory, and tie-invariant, and by \autoref{lem:acyclicity} it is not attainable.
\end{enumerate}

\end{document}